\title[Locally Optimal Integer Solutions for Integer Programming Games]{Locally Optimal Solutions for Integer Programming Games}
\author{Pravesh Koirala}
\author{Mel Krusniak}
\author{Forrest Laine}
\begin{abstract}
\textit{Integer programming games} (IPGs) are $n$-person games with integer strategy spaces. These games are used to model non-cooperative combinatorial decision-making and are used in domains such as cybersecurity and transportation. The prevalent solution concept for IPGs, Nash equilibrium, is difficult to compute and even showing whether such an equilibrium exists is known to be $\Sigma^p_2$-complete. In this work, we introduce a class of relaxed solution concepts for IPGs called \textit{locally optimal integer solutions} (LOIS) that are simpler to obtain than pure Nash equilibria. We demonstrate that LOIS are not only faster and more readily scalable in large-scale games but also support desirable features such as equilibrium enumeration and selection. We also show that these solutions can model a broader class of problems including Stackelberg, Stackelberg-Nash, and generalized IPGs. Finally, we provide initial comparative results in a cybersecurity game called the Critical Node game, showing the performance gains of LOIS in comparison to the existing Nash equilibrium solution concept.
\end{abstract}
\begin{document}

\begin{titlepage}

\maketitle

\setcounter{tocdepth}{2} 
\tableofcontents

\end{titlepage}
\newcommand{\forrest}[1]{\textcolor{purple}{#1}}
\newcommand{\CITE}[0]{\textcolor{blue}{(*CITE*) }}
\newcommand{\INT}[0]{\mathbb{Z}}
\newcommand{\REAL}[0]{\mathbb{R}}
\newcommand{\INBD}[1]{\mathcal{Z}_{#1}}
\section{Introduction}
Integer programs \cite{wolsey2020integer}, in isolation, model a large variety of scenarios including crew scheduling \cite{ryan1981integer}, production planning \cite{pochet2006production}, radiation therapy treatment \cite{lee2003integer}, kidney exchange \cite{constantino2013new} etc. Due to a combinatorial search space, they are difficult to solve, and in fact are known to be \textit{NP-hard} \cite{papadimitriou1981complexity}. An extension of integer programs, where we solve two or more of them together is called an \textit{integer programming game (IPG)} \cite{koppe2011rational}. IPGs are used to model non-cooperative strategic games with integer strategy space and are particularly useful to model scenarios where there are indivisible resources to be considered or when each player makes a combinatorial decision. They were first introduced by \citet{koppe2011rational} and have been since used to model inventory management \cite{lamas2018joint}, facility planning \cite{cronert2024equilibrium}, transportation games \cite{sagratella2020noncooperative}, attacker-defender-based cybersecurity scenarios \cite{dragotto2023zero}, and energy markets \cite{cronert2021location}. Although IPGs have recent origins, works utilizing them have grown in past years and we refer interested readers to the excellent tutorial by \citet{carvalho2023integer} for a comprehensive list.

Very much like general games, the prevalent solution concept for an IPG is a Nash equilibrium. More specifically, IPGs admit both pure and mixed Nash equilibrium with there being theoretical guarantees for the existence of the latter for finite IPGs due to the well-known result by Nash. Since a larger focus in IPG is towards finding, enumerating, and evaluating pure Nash equilibrium due to their straightforward interpretation \cite{dragotto2023zero}, we also primarily concern ourselves with pure equilibrium solutions throughout the rest of this work.

While IPGs promise substantial modeling capability for scenarios where each player has discrete decisions to make, they are notoriously difficult to solve for. In fact, it has been shown that even deciding if an IPG has a pure Nash equilibrium is $\Sigma^p_2$-complete \cite{carvalho2018existence, carvalho2022computing}. And, while algorithms exist to compute solutions for IPGs, as the scale of the game increases, these algorithms become intractable due to the inherent difficulty of the problem itself. Indeed, this has been acknowledged by the research community at large and there is a need for approximate solution concepts that are both reasonable and efficient to compute as demonstrated by this remark from \citet{carvalho2023integer}.

\begin{quote}
	``In contrast, we suggest the more unconventional idea of computing different notions of equilibria, for example, approximate equilibria, when such an approximation is faster to compute. This could provide a significant speedup in the play phase and help adapt the existing algorithms to different solution concepts other than the Nash equilibrium.``
\end{quote}

Embracing this need, we propose a class of novel solution concepts for IPGs called Locally Optimal Integer Solutions (LOIS) that are more relaxed than Nash equilibrium concept. We show that LOIS are theoretically simpler to compute than their Nash counterparts (NP-complete instead of $\Sigma^p_2$). We also outline that it is possible to obtain \textit{KKT-like} conditions for LOIS in the form of what we call \textit{implication constraints}. For a class of IPGs with quadratic payoffs and linear constraints, these implication constraints can be encoded in the form of integer linear constraints which can be obtained for each player and solved together to get the LOI solution. We also show that these LOIS allow for equilibrium enumeration and selection based on some \textit{welfare function}, which is often desirable in many economic applications \cite{dragotto2023zero}. In summary, our contributions can be listed as follows:

\begin{enumerate}
    \item We introduce a relaxed solution concept for IPGs called LOIS that is theoretically simpler and more scalable than pure Nash solutions while allowing for equilibrium enumeration and selection based on a welfare function.
    \item For a class of IPGs with quadratic payoffs and linear constraints, we show that LOIS can be obtained by solving a system of integer linear constraints.
    \item We show that LOIS can be extended to solve Stackelberg, Stackelberg-Nash and generalized IPGs.
\end{enumerate}

The rest of the paper is organized as follows. In section \ref{sec:litrev}, we highlight existing algorithms for solving IPGs and motivate the need for local solutions. Then, in section \ref{subsec:def}, we introduce LOIS and show how to obtain and encode the optimality constraints to get LOIS for an IPG. In section \ref{sec:cng}, we introduce a cybersecurity game called the \textit{critical node game (CNG)}. In section \ref{sec:results}, we make an intrinsic comparison between quality of solutions obtained for different orders of LOIS and an extrinsic comparison between existing algorithms from the literature for both simultaneous and sequential versions of the CNG with LOI solutions. Finally, in section \ref{sec:conclusion}, we summarize the work and discuss different avenues for further research.

\section{Related Works}
\label{sec:litrev}

\subsection{Existing algorithms for IPGs}
The term IPG itself was first introduced by \citet{koppe2011rational} who also outlined an algorithm to solve a specific type of IPG with the payoffs being differences of piecewise linear convex functions. A branching based method to enumerate the solution set of an IPG was outlined by \cite{sagratella2016computing} under convexity assumptions on the payoff. This was later refined by \cite{schwarze2023branch} where they drop the convexity assumption altogether. \citet{carvalho2021cut} introduce a cut-and-play algorithm for separable games whose payoffs are linear with personal strategies but bilinear in terms of opponents. For computing mixed strategies, \citet{carvalho2022computing} propose a sample generation method for separable IPGs. \citet{cronert2024equilibrium} improve upon this method and allow for equilibrium enumeration and selection. A recent work by \citet{dragotto2023zero} introduces a zero-regret algorithm that supports enumeration and selection of pure Nash strategies based on a given welfare function. 

All of the works outlined above labor towards finding exact Nash equilibria. For pure Nash solutions, this is known to be $\Sigma^p_2$-complete. Intuitively, for an IPG with $n$ integer programs, even checking if a candidate solution is indeed a Nash equilibrium point would require solving $n$ integer mathematical programs to ensure that no profitable unilateral deviation exists for any $n$ of the players. This inherently exacerbates the task of finding a pure Nash equilibrium. Indeed, \citet{carvalho2023integer} do a comparative study of a substantial number of these algorithms and show that most of them are comprised of similar algorithmic building blocks as shown in figure \ref{fig:blocks}. As can be seen, most of these algorithms, beginning from an \textit{approximation} stage, iteratively solve a difficult mathematical program potentially consisting of combinatorial elements like complementarity constraints or mixed-integer constraints in the \textit{play} phase, and incrementally introduce additional cuts or constraints to exclude infeasible or suboptimal solutions in the \textit{improve} phase. Which is to say, a general algorithm for finding IPG has to solve many challenging mathematical programs in succession which is the major source of complexity in the process and presents a significant bottleneck for further scalability.

\subsection{Local solutions}
One way to break past this inherent difficulty in computing pure Nash equilibria is to relax the notion of solution concept itself. This idea itself is not new and has found considerable usage in optimization literature across the years. For example, the task of finding a global solution for many combinatorial problems is known to be NP-hard \cite{woeginger2003exact, hochba1997approximation, krishnamoorthy1975np} but when the global optimality conditions are relaxed to \textit{local optimality} it becomes significantly simpler and faster to obtain such solutions \cite{lin1973effective, papadimitriou1998combinatorial} . Of course, this tradeoff comes at the cost of the quality of solution itself \cite{papadimitriou1978some} but in many cases (especially when problem space grows) local solutions are not only acceptable, but also the only tractable option. 

In general, these local solutions are obtained by relaxing the criteria that a solution be the best to the milder one of it being the best in its \textit{neighborhood} \cite{lin1965computer, schaffer1991simple, kernighan1970efficient}, which has the straightforward effect of reducing the search space for any candidate solution drastically. This yields a competitive result in many practical applications. For example, the task of finding a suboptimal route for the \textit{traveling salesman problem} is usually tackled via the \textit{lin-kernighan heuristics} \cite{lin1965computer} which yields a local solution with decent costs. Local search with edge-weighting heuristics is often used for finding minimum vertex cover\cite{cai2011local}. Similar approaches for finding local solutions are analogously used for maxSAT \cite{yagiura2001efficient}, vehicle routing problems \cite{kytojoki2007efficient}, scheduling problems \cite{wen2016adaptive} and scores of other combinatorial problems \cite{ahuja2002survey}.

With the prevalence and widespread acceptance of local solutions in the literature, it is not at all far-fetched to adopt the same for an IPG. In the following section, we start with the definition of an IPG, explain the pure Nash equilibrium concept, and then gradually develop the idea of a locally optimal solution for an IPG.

\begin{figure}
    \centering
    \includegraphics[width=1\linewidth]{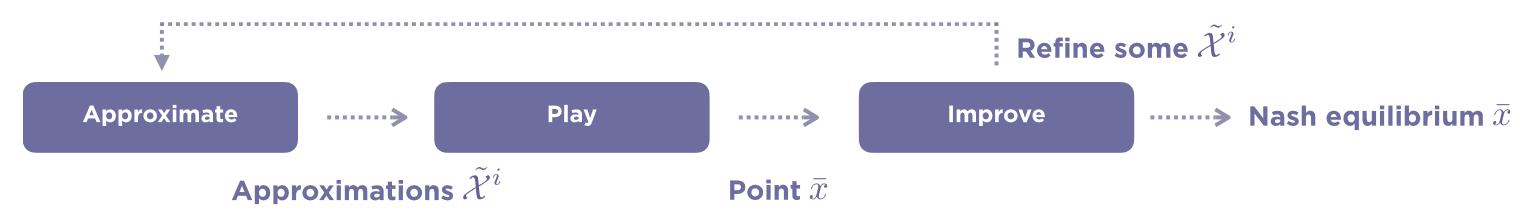}
    \caption{Algorithmic building blocks of major algorithms for solving IPGs. Each algorithm must repeatedly solve a mixed-integer program in the play phase which is the major source of complexity in finding solutions for IPGs. Figure is taken from \citet{carvalho2023integer}}
    \label{fig:blocks}
    \Description{Algorithmic building blocks of major algorithms for solving IPGs.}
\end{figure}

\section{Definitions}
\label{subsec:def}
\begin{definition}[IPG]
\label{def:IPG}

    We define IPGs with $n$ players as a simultaneous, complete-information, and non-cooperative game in the form of a $n$-tuple $(P_1, P_2, ... P_n)$, where each $P_i$ is a mathematical program of the form:
\begin{align}
    \nonumber P_i := \min_{x^i} ~&f^i(x^i; x^{-i}) \\
                &s.t.~~g^i(x^i) \ge 0,
\end{align} 
where $x^i \in \INT^{n_i}$ is the strategy of player $i$, $f^i:\INT^N \mapsto \REAL$ is their objective function with $N = \sum_i n_i$, and $g^i: \INT^{n_i} \mapsto \REAL^{m_i}$ are the constraint functions. With $x\in \INT^N$ as the joint strategy space, we use $x^{-i}$ to denote all other decision variables sans $x^i$. For the majority of this text, we base our arguments on this definition of IPG which assumes all decision variables to be integers. A more general notion of a \textit{mixed IPG} also exists where strategy space of each player is a combination of real and integer variables, however, we do not consider them in this work. Furthermore, although we begin by assuming that the constraints functions of each player do not take into account other player's strategies, as we show in subsection \ref{subsec:generalized}, much of our arguments do hold in case of a \textit{generalized IPG}, where constraints may be coupled (i.e. depend on other player's strategies). As discussed earlier, the prevalent solution concept for IPGs is the Nash equilibrium which we now define.
\end{definition}

\begin{definition}[Pure Nash equilibrium for IPGs]
\label{def:Nash}
    A joint strategy $x^* \in \INT^N$ is a pure Nash equilibrium for an IPG if and only if $\forall i, g^i(x^*)\ge 0$ and for all players $i$ the following holds: 
    $$f^i(x^*) \le f^i(\tilde x^i; (x^*)^{-i}) ~~ \forall \tilde x^i \in \{x'\in \INT^{n_i}| g^i(x') \ge 0\}$$
\end{definition}
This definition aligns with the general definition of Nash equilibrium for games in that it defines a point to be an equilibrium if and only if no single player can earn any gain by unilaterally deviating from their strategy. The concept of a probabilistic \textit{mixed Nash equilibrium} is also analogously used for IPGs but as we state earlier, we focus exclusively in pure equilibria in this work.

To eventually obtain the concept of local solutions for IPGs, we first begin with the concept of a local neighborhood of any integer vector.

\begin{definition}[m-order integer neighborhood]
    For a point $x\in\INT^n$, we define its \textit{m-order integer neighborhood} as $\INBD{m}(x) = \{x' \in \INT^n ~| ~\lVert x-x'\rVert_1\le m\}$ where the operator $\lVert\cdot \rVert_1$ denotes the \textit{L1-norm}.
\end{definition}

Based on this definition, we can now define locally optimal integer solutions for a single integer program.

\begin{definition}[Locally optimal integer solution of $m$-th order]
    \label{def:lois}
    For a parameterized integer mathematical program $P := \min_{x\in \INT^n} f(x; a) ~s.t. ~ x\in C$ where the payoff function $f$ is parameterized in $a \in \INT^m$ and $C$ is some feasible set, $\hat x^m \in C$ is said to be a \textit{locally optimal integer solution of $m$-th order (LOIS-m)} if and only if $f(\hat x^m; a) \le f(x'; a) ~~ \forall x' \in \INBD{m}(\hat x^m) \cap C$.
\end{definition}

It is readily apparent that LOIS-m corresponds to the notion of \textit{local optimality} for general mathematical programs where the criteria imposed on any candidate solution is that it be optimal in its neighborhood. A useful tool for modeling and solving such general programs, assuming certain convexity and regularity assumptions, is the \textit{Karush-Kun-Tucker} (KKT) conditions \cite{gordon2012karush}. Obtaining and solving for these KKT conditions allow for computing Nash equilibrium for a class of games \cite{dreves2011solution}. Along the same vein, we now focus on obtaining analogous \textit{KKT-like} conditions for LOIS which can then be solved together to obtain the overall LOIS solution for IPGs.

\subsection{Optimality conditions for LOIS-m}
\label{subsec:opt}
Consider a program parameterized in $a \in \INT^p$:
$$\min_{x\in \INT^n} f(x; a) ~~ s.t. \bigwedge_{j=1}^{J} \left( g_j(x)\ge 0 \right),$$

where $f : \INT^{n+p} \mapsto \REAL$ and $\forall j\in 1..J, ~g_j : \INT^n \mapsto \REAL$. By definition \ref{def:lois}, $\hat{x}$ is a LOIS-m iff 

\begin{align}
    \label{eq:fea1}
    \bigwedge_{j=1}^{J} \left( g_j(\hat x)\ge 0 \right),
\end{align}
and, 
$$\forall \left(\hat x + \delta\in \INBD{m}(\hat x)\right) ~~  \left[ \bigwedge_{j=1}^{J} \left( g_j(\hat x + \delta)\ge 0 \right)\rightarrow f(\hat x; a) \le f(\hat x + \delta; a) \right]$$

By contraposition, this can be rewritten as:
$$\forall \left(\hat x + \delta\in \INBD{m}(\hat x)\right) ~~  \left[ \neg \left(f(\hat x; a) \le f(\hat x + \delta; a) \right)  \rightarrow \neg \bigwedge_{j=1}^{J} \left(g_j(\hat x + \delta)\ge 0 \right) \right]$$

Resolving the negations, we finally obtain:
\begin{align}
\label{eq:opt1}
\forall \left(\hat x + \delta\in \INBD{m}(\hat x)\right) ~~  \left[ f(\hat x; a) > f(\hat x + \delta; a)  \rightarrow \bigvee_{j=1}^{J} \left( g_j(\hat x + \delta) < 0 \right) \right]    
\end{align}

Equations \ref{eq:fea1} and \ref{eq:opt1} then form the LOIS-m optimality conditions for the parameterized program. Essentially, these optimality conditions state that for any locally optimal point, it must be a) feasible and b) there should be no other feasible point in its neighborhood that further minimizes the cost. Stated alternatively, any point in the neighborhood of the optimal point, if it further minimizes the cost, must be infeasible. Due to the presence of the implication in these conditions, we call these \textit{implication constraints (ICs)}. 

\begin{definition} [Implication constraints (ICs)]
    The implication $p \rightarrow \vee_j ~ q_j$ is defined to be an implication constraint if and only if all $p, q_j$ are inequalities.
\end{definition}

It can be seen that the structure of the payoff / constraint functions and the number of points in the neighborhood being considered determine the number and complexity of the ICs. 

\subsection{LOIS-m solution for IPGs}
We recall IPGs to be the joint mathematical program $(P_1, ... P_n)$ where each program $P_i$ is parameterized in $x^{-i}$. For each player $i$, let $\mathcal{O}^i_m$ be the set of points satisfying their LOIS-m optimality conditions as introduced in subsection \ref{subsec:opt}.

Then, we define $\hat x$ as the LOIS-m solution for the IPG iff:
\begin{align}
    \hat x \in \bigcap_{i=1}^{n} \mathcal{O}^i_m
\end{align}

In general, $\hat x$ is the set of all points that satisfy all inequalities and ICs of all programs simultaneously. To find the LOIS-m optimal solution then, we can encode all corresponding ICs and inequalities in a single mathematical program as constraints, effectively turning it into a constraint satisfaction problem. For example, 

\begin{align}
    \nonumber \min_{x\in\INT^n} ~~& 1 \\
    \label{eq:oo}
    s.t. & ~~x \in \bigcap_{i=1}^{n}\mathcal{O}_m^i
\end{align}

Programs such as in equation \ref{eq:oo} consist of inequality constraints and implication constraints. For ease, we define such mathematical programs as follows.
\begin{definition}[Mathematical Program with Implication Constraints (MPIC)]
    A mathematical program that also has ICs.
\end{definition}

In general, MPICs like these may not be readily solvable and may require re-encoding to make them compatible with existing solvers. However, even with a proper encoding, depending upon the number and complexity of these ICs, solving for a solution may not be trivial at all. But in cases where each of these ICs consists of exclusively linear terms, it is possible to encode them as a system of integer linear constraints, converting the task of finding a solution to that of finding a feasible point for an integer linear program. Therefore, this particular class of ICs with linear inequalities is of special interest.

\begin{definition}[Linear implication constraints (LICs)]
    The implication $p \rightarrow \vee_{j} q_j$ is defined to be a linear implication constraint if all $p, q_j$ are linear inequalities.
\end{definition}

As we now show, for a specific form of integer program that has a quadratic payoff and linear constraints, the obtained LOIS-m optimality conditions consists of linear implication constraints.

\subsection{Quadratic integer programs with linear constraints}
\label{subsec:qiplc}
It is easy to show that for quadratic integer programs with linear constraints, equation \ref{eq:opt1} takes the form of LICs. Consider for example an integer program whose objective $f(x)$ is quadratic in $x \in \INT^n$, and all constraints $g_j(x) \ge 0$ are linear. 

\begin{lemma}
For a constant $\delta \in \INT^n$, the inequality $f(x + \delta) - f(x) < 0$ is linear in $x$.     
\end{lemma}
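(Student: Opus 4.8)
The plan is to expand $f$ in its canonical quadratic form and observe that the second-order part is annihilated by the finite difference $f(x+\delta)-f(x)$, leaving an expression that is affine in $x$. Since $\delta$ is treated as a fixed constant, all of its contribution collapses into constant coefficients.

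First I would write a general quadratic payoff $f:\INT^n\mapsto\REAL$ as $f(x)=x^\top Q x + c^\top x + d$, where $Q\in\REAL^{n\times n}$, $c\in\REAL^n$, and $d\in\REAL$; without loss of generality $Q$ may be taken symmetric, by replacing it with $\tfrac12(Q+Q^\top)$, which leaves the value of the quadratic form unchanged.

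Next I would substitute $x+\delta$ for $x$ and expand. Using symmetry of $Q$ to merge the two cross terms, $f(x+\delta)=x^\top Q x + 2\delta^\top Q x + \delta^\top Q \delta + c^\top x + c^\top\delta + d$. Subtracting $f(x)=x^\top Q x + c^\top x + d$ then cancels both the quadratic term $x^\top Q x$ and the linear term $c^\top x$ in $x$, giving $f(x+\delta)-f(x)=(2Q\delta)^\top x + (\delta^\top Q\delta + c^\top\delta)$. This is the crux of the argument: the only surviving $x$-dependence is through the fixed vector $2Q\delta$, so the expression is affine in $x$, with constant coefficient vector $2Q\delta$ and constant offset $\delta^\top Q\delta + c^\top\delta$. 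Consequently $f(x+\delta)-f(x)<0$ is a single linear inequality in $x$, as claimed.

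There is no real obstacle here; the only points that merit care are bookkeeping ones. The symmetrization of $Q$ is what lets the two mixed terms $x^\top Q\delta$ and $\delta^\top Q x$ be combined into $2\delta^\top Q x$ rather than the slightly clumsier $\delta^\top(Q+Q^\top)x$; either way the coefficient of $x$ is constant, so the conclusion is unaffected. One should also read \emph{linear} in the affine sense, since the offset $\delta^\top Q\delta + c^\top\delta$ is generally nonzero; this is harmless, as affine inequalities are precisely the class that can be assembled into integer linear constraints, which is exactly what is needed to realize the LICs developed in the following subsection.
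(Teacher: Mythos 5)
Your proof is correct and takes essentially the same approach as the paper: expand the quadratic form at $x+\delta$, observe that the quadratic term $x^\top Q x$ cancels in the difference, and conclude the remainder is affine (hence linear) in $x$. The only cosmetic difference is that you symmetrize $Q$ to merge the cross terms $x^\top Q\delta$ and $\delta^\top Q x$ into $2\delta^\top Q x$, whereas the paper simply leaves them as two separate terms; the conclusion is identical either way.
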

\begin{proof}
    For a quadratic $f(x) = x^TQx + q^Tx + r$,
    \begin{align*}
        f(x+\delta) &= (x+\delta)^T Q (x+\delta) + q^T (x+\delta) + r \\
        &= x^TQx + x^TQ\delta +\delta^TQx+\delta^TQ\delta +q^Tx+q^T\delta +r.
    \end{align*}
    $f(x+\delta)-f(x) < 0$ is then obtained as:
    \begin{align*}
        x^TQ\delta +\delta^TQx+\delta^TQ\delta +q^T\delta < 0,
    \end{align*}
    which is clearly linear in $x$.
\end{proof}
Additionally, for a constant $\delta$, all $g_j(x+\delta)<0$ are linear inequalities. So, in sum, each implication constraint for this program is a LIC.

\subsection{Quadratic IPGs with linear constraints (QPIGs)}
\label{subsec:qpig}
A simple extension of results from subsection \ref{subsec:qiplc} is that for an IPG that consists solely of quadratic programs with linear constraints, the LOIS-m optimality conditions consist of linear inequalities and LICs. Consider the following example: 

\begin{example}
    Let $P_1, P_2$ be a quadratic IPG with linear constraints such that,
$$P_1 := \min_{x\in \INT} x^2+2xy ~~s.t. ~1\le x$$
$$P_2 := \min_{y \in \INT} y^2 +3xy+2 ~~s.t. -5 \le y \le 5$$

The LOIS-1 implication constraints for $P_1$ are of the form:
$$(x \pm 1)^2+2(x \pm 1)y - x^2 -2xy <0 \rightarrow x \pm 1 \ge 1 $$
$$or, \pm 2x + 1 \pm 2y < 0 \rightarrow x \pm 1 < 1$$
Similarly, LOIS-1 implication constraints for $P_2$ are of the form:
$$\pm 2y +1 \pm 3x <0 \rightarrow (y \pm 1 >5) \vee (y\pm 1< -5)$$
The LOIS-1 solution of the entire program can then be obtained by solving the following program:

\begin{align*}
    \min_{x\in\INT,y\in\INT} ~~& 1 \\
    &s.t. ~~x \ge 1 \\
    &~~~~ -5 \le y \le 5\\
    &~~~~ 2x+1+2y <0 \rightarrow x< 0 \\
    &~~~~ -2x+1-2y<0 \rightarrow x < 2 \\
    &~~~~ 2y+1+3x<0 \rightarrow (y>4) \vee (y<-6) \\
    &~~~~ -2y+1-3x<0 \rightarrow (y>6) \vee (y<-4)
\end{align*}
\end{example}

\subsection{Encoding LICs as linear integer constraints}
\label{subsec:encoding}
Consider a single LIC $p(x) \ge 0\rightarrow q_1(x) \ge 0\vee q_2(x)\ge 0 \vee ... q_r(x) \ge 0$ such that each $p(x), q_j(x)$ are linear in $x$. To encode this LIC as a linear integer constraint we use the Big-M method. We first introduce $r+1$ binary variables $m_0...m_r \in \{0,1\}^{r+1}$. The implication is equivalent to:
$$\neg(p(x) \ge 0) \vee q_1(x) \ge 0\vee q_2(x)\ge 0 \vee ... q_r(x) \ge 0$$
$$or, p(x) < 0 \vee q_1(x) \ge 0\vee q_2(x)\ge 0 \vee ... q_r(x) \ge 0$$

We choose a sufficiently large $M \in \INT^+$ and beginning with the empty constraint set $C = \Phi$, subsequently introduce the following constraints to it:

$$C \gets \{p(x) - (1-m_0) M<0 \}$$
$$\forall j \in \{1..r\}~ C\gets C ~\cup ~\{q_j(x) + (1-m_j) M\ge 0 \}$$
And finally, we add the constraint:
$$C \gets C \cup \left\{\sum_{j=0}^r m_j \ge 1\right\}$$
It is easy to see that the constraint set $C$ is analogous to the LIC. Intuitively, we have reformulated the LIC to the logical disjunction of linear inequalities and used binary variables to ensure that at least one of the constituent inequalities are satisfied. An implementation note to make here is that open inequalities of the form $p > 0$ can be re-encoded as $p \ge \epsilon$ for some small $\epsilon$.

\subsection{Complexity of LOIS}
Expanding upon the results from \ref{subsec:encoding}, it is evident that at least for quadratic IPGs with linear constraints, we can construct an integer linear program and convert the task of finding a LOIS to that of a constraint satisfaction problem. It is known that finding a feasible solution to mixed integer linear program is NP-complete \cite{fischetti2009feasibility}, which is lower in complexity than $\Sigma^p_2$-complete. As the size of the IPG increases, however, the size of the resulting constraint satisfaction problem also increases. Consider an IPG with $i=1..N$ integer programs each consisting of $n^i$ integer decision variables and $m^i$ constraint functions. For LOIS-1, each decision variable generates two implication constraints with each such ICs themselves consisting of $m+2$ inequalities by introducing $m+1$ new boolean variables. A single program consisting of LOIS-1 optimality conditions for all $N$ individual programs then consists of following:
\begin{align*}
    &\sum_i^N\left( 2n^i(m^i+1) + n^i\right) ~~\text{integer variables} \\
    & \sum_i^N\left( 2n^i(m^i+2) + m^i\right) ~~\text{constraints} 
\end{align*}

For LOIS-m solutions with $m>1$, the complexity increases because of the increase in number of implication constraints. Although, if the problem permits, appropriate pruning could help reduce the number of constraints even as the order of LOIS increases.

\subsection{Equilibrium enumeration and selection}
LOIS permits the procedure of equilibrium enumeration, and more importantly, equilibrium selection. For example, consider a quadratic IPG with linear constraints $(P_1, P_2...P_n)$. With techniques outlined in subsection \ref{subsec:qpig} and \ref{subsec:encoding}, we can easily obtain, for each $P_i$, a linear integer constraint set $C_i$. Finally, to obtain a solution, we solve the program 
$$\min_{x} 1 ~~s.t. x\in C_1 \cap C_2...\cap C_n$$

If the program is feasible, let the result of the program be $x_1^*$. We can then proceed to enumerate a new solution by excluding the obtained one by adding the cuts $\{x > x^*_1 \cup x < x^*_1\}$. The logical disjunction can be encoded via procedure similar to one outlined in subsection \ref{subsec:encoding}.

To select an equilibrium with respect to a welfare function $w(x)$, it is only a matter of using such function as the payoff. Assuming we want an equilibrium that maximizes the payoff, we could then solve a program:

\begin{align}
    \nonumber\max_x~~ w(x) \\
    s.t. ~x\in \bigcap_i C_i
\end{align}

Since $C_i$ is a linear integer constraint set, depending upon the welfare function, we could employ various commercial solvers to solve for the solution. For instance, solvers like CPLEX \cite{cplex2009v12} and GPLK \cite{GLPK} may be suitable to solve for a linear welfare function, whereas solvers like Gurobi \cite{gurobi} or SCIP \cite{achterberg2009scip} may be used when $w(x)$ is quadratic in $x$.

\subsection{LOIS for generalized IPGs}
\label{subsec:generalized}
We define \textit{generalized IPG} as an IPG where each player's constraints are allowed to be coupled with other player's strategies. Following the definitions introduced in subsection \ref{subsec:def}, when for each $P_i$, the constraint functions $g^i(x^i) : \INT^{n_i} \mapsto \REAL^{m_i}$ are instead allowed to be $g^i(x^i; x^{-i}): \INT^N\mapsto \REAL^{m_i}$, we obtain a generalized IPG. These extensions allow strategic interactions to occur via player constraints as well, providing a greater modeling power suitable for a variety of situations.

It can be seen that LOIS optimality conditions are easily adaptable to generalized IPGs as well. For instance, if we recall the implication constraints for each player in a non-generalized IPG to be:

\begin{align}
\label{eq:optplayer}
\forall \left(\hat x^i + \delta\in \INBD{m}(\hat x^i)\right) ~~  \left[ f^i(\hat x^i; \hat x^{-i}) > f^i(\hat x^i + \delta; \hat x^{-i})  \rightarrow \bigvee_{j=1}^{J} g^i_j(\hat x^{i} + \delta) < 0 \right]    
\end{align}

Then, for a generalized IPG, a minor modification in the constraint term is all it takes to obtain optimality conditions for LOIS.

\begin{align}
\label{eq:optplayer2}
\forall \left(\hat x^i + \delta\in \INBD{m}(\hat x^i)\right) ~~  \left[ f^i(\hat x^i; \hat x^{-i}) > f(\hat x^i + \delta; \hat x^{-i})  \rightarrow \bigvee_{j=1}^{J} g^i_j(\hat x^{i} + \delta; \hat x^{-i}) < 0 \right]    
\end{align}

For IPGs with quadratic payoffs and linear constraints, the encoding strategy outlined in subsection \ref{subsec:encoding} works without any further modification. 

\subsection{Stackelberg and Stackelberg-Nash IPGs}
Stackelberg games \cite{Emile1953StackelbergV} are well-known in the economic literature. They involve a leader and a follower in a sequential interaction. The leader first chooses their strategy and commits to it, following which, the follower which proceeds to choose their own strategy. Since the follower's choices may have strategic implications for the leader, the leader must be aware of the follower's response while committing to their own choice. We consider a Stackelberg IPG with two players defined as:

\begin{align*}
    \min_{x^l, x^f} ~~ & f^l(x^l; x^f) \\
    &s.t.~g^l(x^l;x^f)\ge 0\\
    &~~~~~~~~~~~x^f \in \arg\min_{x^f} ~f^f(x^f;x^l)\\
    &~~~~~~~~~~~~~~~~~~~~~~~~~~~~~~s.t.~~ g^f(x^f;x^l)\ge0, \\
\end{align*}

where $x^l\in \INT^{n_l}, x^f\in \INT^{n_f}$ are leader and follower strategies, $f^l : \INT^{n_l+n_f}\mapsto\REAL, f^f:\INT^{n_l+n_f}\mapsto\REAL$ are leader and follower costs and $g^l:\INT^{n_l+n_f}\mapsto\REAL^{m_l}, g^f: \INT^{n_l+n_f}\mapsto\REAL^{m_f}$ are leader and follower constraints. The follower, in isolation, is a parameterized integer program and thus, following the procedures outlined in subsection \ref{subsec:opt}, we can obtain LOIS optimality conditions for it. Assuming these optimality conditions to be $\mathcal{O}^f (x^l)$, the Stackelberg program can then be restated as:
\begin{align*}
    \min_{x^l, x^f} ~~ & f^l(x^l; x^f) \\
    &s.t.~g^l(x^l;x^f)\ge 0\\
    &~~~~~~~~~~~x^f \in \mathcal{O}^f(x^l).
\end{align*}

As the parameterized optimality conditions $\mathcal{O}^f(x^l)$ consists of linear integer inequalities, it can simply be augmented into the leader's program and solved to obtain a solution. This procedure can be further generalized for what is called a Stackelberg-Nash game, which has one leader and multiple ($m>1$) followers. In such a case, we can analogously obtain the optimality conditions for all followers $\mathcal{O}^1(x^l), \mathcal{O}^2(x^l) ... \mathcal{O}^m(x^l)$ and construct and solve a single program of the form:
\begin{align*}
    \min_{x^l, x^f} ~~ & f^l(x^l; x^f) \\
    &s.t.~g^l(x^l;x^f)\ge 0\\
    &~~~~~~~~~~~x^f \in \mathcal{O}^1(x^l)\cap \mathcal{O}^2(x^l)...\cap \mathcal{O}^m(x^l)
\end{align*}

For Stackelberg-Nash games where all objectives are quadratic and constraints linear, the optimality conditions are easily encoded as linear mixed-integer constraints and the final program is obtained as a quadratic mixed-integer program.

\subsection{Interpretation of LOIS}

Depending upon the domain of the problem, LOIS admits some nice interpretations. For example, in a \textit{choice} game where each player's strategy represents a vector of yes/no choices between different alternatives (like resources, infrastructures, locations etc.), LOIS-1 solutions are optimal solutions in what is called their \textit{flip-neighborhood} \cite{yagiura20063}. That is, any single player cannot do better by unilaterally flipping any one of their choices. Similarly, LOIS-2 solutions in these contexts are also interpretable as being optimal in both their \textit{flip-neighborhood} and \textit{swap-neighborhood} meaning that no player can unilaterally benefit by either flipping any of their choices or swapping one of their choices for the another. In an arbitrary IPG, points satisfying LOIS-1 conditions also have the interpretation that no player can unilaterally increase their payoffs by changing \textit{exactly one} of their strategy by unity.


\section{Critical node game}
\label{sec:cng}
The critical node game (CNG) models a cybersecurity scenario with an attacker and a defender trying to attack/defend a set of cyberinfrastructure (or nodes). The description of this game that now follows is largely derived from \citet{dragotto2024critical} and we encourage readers to refer to the original text for any missing details. The CNG consists of $V$ critical nodes (cyberinfrastructure) with the Boolean decision variables $x_i, \alpha_i ~\forall i\in V$ denoting the choices of the defender and attacker respectively. Specifically, $x_i=1$ if and only if the defender chooses to defend the node $i$ and zero otherwise. Similarly, $\alpha_i=1$ if and only if the attacker chooses to attack the target. Both attacker and defender are provided limited \textit{budget} and there are strategic interactions between their choices. The complete 2-person CNG is specified as a simultaneous, non-cooperative, and complete-information game comprised of the programs $(P_A, P_D)$ such that the attacker solves:
\begin{align*}
    P_A := \max_{\alpha\in\{0,1\}^{|V|}}~&f^a(\alpha;x) \\
    &s.t. ~ a^T\alpha \le A
\end{align*}

and the defender solves:
\begin{align*}
    P_D := \max_{x\in\{0,1\}^{|V|}}~&f^d(x; \alpha) \\
    &s.t. ~ d^Tx \le D
\end{align*}

The functions $f^a : \INT^{2|V|}\mapsto \REAL, ~f^d:\INT^{2|V|}\mapsto \REAL$ are the payoffs of the attacker and the defender respectively. Similarly, each attack or defend choice is associated with the costs represented by the vectors $a\in\REAL_+^{|V|}, d\in\REAL_+^{|V|}$. The total budget for attacker is $A \in \REAL_+$ and for defender it is $D \in \REAL_+$. Since, the constraints in this game do not interact with other player's strategy, this is \textit{not} a generalized IPG. Similarly, the strategy space of both players do not have reals mixed in, therefore, this is also \textit{not} a mixed IPG. The only source of strategic interaction in this game is the payoffs, which we now describe.

Let $p_i^d\in \INT_+$ and $p_i^a \in \INT_+$ be the \textit{criticality} of node $i$ for respectively, the defender and the attacker. The variables $0\le \delta, \eta, \epsilon, \gamma \le 1$ are then chosen as real-valued scalar parameters of a CNG with $\delta < \eta < \epsilon$ and the payoffs for four distinct scenarios are taken as:
\begin{enumerate}
    \item \textbf{Normal operation ($x_i=0, \alpha_i=0$)}: Defender does not defend $i$, nor the attacker attacks it. Defender gets full payoff $p_i^d$, while attacker incurs opportunity cost $\gamma p_i^a$.
    \item \textbf{Successful attack ($x_i=0, \alpha_i=1$)}: Attacker attacks an undefended $i$ and obtains full payoff $p_i^a$. Defender obtains reduced $\delta p_i^d$.
    \item \textbf{Successful defense ($x_i=1, \alpha_i =0$}: Defender defends a non-attacked target and obtains reduced $\epsilon p_i^d$. Attacker obtains $0$.
    \item \textbf{Attack and defense ($x_i=1, \alpha_i=1$)}: Attacker attacks a defended target and receives $(1-\eta)p_i^a$. Defender receives $\eta p_i^d$.
\end{enumerate}
\begin{table}[h]
    \centering
    \begin{tabular}{c|cc}
    
    & $\mathbf{\alpha_i = 0}$ & $\mathbf{\alpha_i = 1}$ \\[2pt] \hline \\[-6pt]
    $\mathbf{x_i=0}$ & \textcolor{blue}{$p^d_i$} | \textcolor{red}{$-\gamma p_i^a$} & \textcolor{blue}{$\delta p_i^d$} | \textcolor{red}{$p_i^a$} \\[3pt]
    $\mathbf{x_i=1}$ & \textcolor{blue}{$\epsilon p_i^d$} | \textcolor{red}{0} & \textcolor{blue}{$\eta p_i^d$} | \textcolor{red}{$(1-\eta) p_i^d$} \\[3pt]
    \end{tabular}
    \caption{Payoffs for each node in CNG depending upon attack / defend choices. Defender payoffs are blue, attacker payoffs are red.}
    \label{tab:your_label}
\end{table}
The full attacker payoff is then obtained as:
$$f^a(\alpha; x) = \sum_{i\in V} \left(-\gamma p_i^a(1-x_i)(1-\alpha_i) + p_i^a(1-x_i)\alpha_i + (1-\eta)p_i^a x_i \alpha_i \right)$$

Similarly, the full defender payoff is:
$$f^d(x; \alpha) = \sum_{i\in V} \left( p_i^d (1-x_i)(1-\alpha_i) + \delta p_i^d (1-x_i)\alpha_i + \epsilon p_i^d x_i(1-\alpha_i) + \eta p_i^d x_i \alpha_i  \right)$$

With the joint space $$\mathcal J = \{(x,\alpha): x\in \{0,1\}^{|V|}, \alpha \in \{0,1\}^{|V|}, d^T x \le D, a^T \alpha\le A\},$$ \citet{dragotto2024critical} define two key metrics to compare the \textit{quality} of any equilibrium solution for a CNG instance which are defined as follows:

\begin{definition}[Price of aggression (PoA)]
    For a given CNG instance, the PoA for an equilibrium point $(\hat x, \hat \alpha)$ with $(\bar x, \bar \alpha) = \arg\max_{x, \alpha} f^a(x, \alpha) ~s.t.~x,\alpha \in\mathcal{J}$ is the ratio $f^a(\bar x, \bar \alpha) / f^a(\hat x, \hat \alpha)$ whenever $|f^a(\hat x, \hat \alpha)|>0$.
\end{definition}
\begin{definition}[Price of security (PoS)]
    For a given CNG instance, the PoS for an equilibrium point $(\hat x, \hat \alpha)$ with $(\bar x, \bar \alpha) = \arg\max_{x, \alpha} f^d(x, \alpha) ~s.t.~x,\alpha \in\mathcal{J}$ is the ratio $f^d(\bar x, \bar \alpha) / f^d(\hat x, \hat \alpha)$ whenever $|f^d(\hat x, \hat \alpha)|>0$.
\end{definition}

Large values for PoA and PoS for a solution indicate it to being of lower quality for the attacker and the defender respectively, whereas the minimum achievable values for either of these (i.e. 1) indicate the best achievable solution for the same. 

\subsection{Multilevel critical node game (MCNG)}
This extension to the CNG is based on \citet{carvalho2023integer}. MCNG comprises of largely the same elements of the CNG with a key difference being that instead of a simultaneous game like CNG, it's a sequential game where the defender (or the attacker) first commits to their strategies after which, the attacker (or the defender) chooses theirs. Assuming the defender to be the leader then, MCNG is of the form:

\begin{align*}
    \max_{x, \alpha}~&f^d(x; \alpha) \\
    &s.t. ~ d^Tx \le D, x\in \{0,1\}^{|V|} \\
    &~~~~~~~~\alpha \in \arg\max_{\hat\alpha\in\{0,1\}^{|V|}}~f^a(\hat\alpha;x) \\
    &~~~~~~~~~~~~~~~~~~~~~~~~s.t. ~ a^T\hat\alpha \le A
\end{align*}

\section{Experiments}
\label{sec:results}
All of the experiments were performed on synthetic instances of the CNG constructed using the procedure outlined in \citet{dragotto2024critical} using a machine with 12th Gen Intel(R) Core(TM) i7-12700 2.10 GHz processor and 32 GB
of RAM. We perform two different kinds of evaluation using LOIS i.e. \textit{intrinsic evaluation} and \textit{extrinsic evaluation} as outlined below.

\subsection{Intrinsic evaluation}
In this setup, we test the quality of results obtained while using different \textit{order} of LOI solutions, namely \textit{LOIS-1} and \textit{LOIS-2}. We only look for a feasible solution without optimizing against any welfare function. Since this basically reduces the problem of obtaining a solution to a constraint satisfaction problem as discussed, we use an appropriate open-source satisfiability modulo theory (SMT) solver i.e. Z3 \cite{de2008z3} to obtain the solutions. As in \citet{dragotto2024critical}, we perform the experiments for 20 feasible samples of each size discarding any infeasible instances for which the LOI solution of the corresponding order does not exist. Table \ref{table:intrinsic} outlines the comparison results detailing the average of the metrics across the 20 instances for each size. As expected, \textit{LOIS-1} solutions are orders of magnitude faster and substantially simpler (with respect to the number of implication constraints required) but generally provide a lower quality solution with high \textit{PoS} and \textit{PoA} as compared to a higher order solution. The \textit{PoS} and \textit{PoA} obtained in \textit{LOIS-2} show similar behavior to what's found in the literature \cite{dragotto2024critical} with \textit{PoS} being closer to $1$ and \textit{PoA} increasing as the instance size increases.

\begin{table}[ht]
\centering
\caption{Comparison of order of LOIS on solution quality for synthetic CNG instances across the metrics time taken for obtaining a solution in seconds (t), total number of implication constraints (\#ICs), attacker's payoff ($f^a$), defender's payoff ($f^d$), price of security ($PoS$), and price of aggression ($PoA$). \textit{LOIS-1} solutions are faster and include substantially fewer implication constraints but produce low quality solutions in comparison to \textit{LOIS-1} as instance size increases. }
\begin{tabular}{
    S[table-format=3.0]  
    l                     
    S[table-format=2.2]   
    S[table-format=5]   
    S[table-format=5.2]   
    S[table-format=4.2]   
    S[table-format=2.2]   
    S[table-format=2.2]   
}
\toprule
{$|V|$} & {\textit{LOIS-m}} & {Time (s)} & {\#\textit{ICs}} & {$f^a$} & {$f^d$} & {$PoS$} & {$PoA$} \\
\midrule
10   & lois-1 & 0.00	& 40	&{15.39}	&{\textbf{198.43}}	& \textbf{3.04}	&{5.91}
\\
     & lois-2 & 0.03 &	400	& \textbf{23.30} &	193.49	& {3.12}	& \textbf{5.18}
\\\hline

25   & lois-1 &0.00	&100	&{34.99}	&\textbf{463.29}	&3.70	&\textbf{7.64}
 \\
     & lois-2 & 0.44 &	2500	&\textbf{41.17} &	455.86 &	\textbf{3.43} &	12.70
\\\hline

50   & lois-1 & 0.02	&200	&{54.84}&	{933.60}	&{3.47}	&{9.39}
\\
     & lois-2 & 3.30 &	10000	&\textbf{188.57}	&\textbf{1127.38}	&\textbf{1.71} &	\textbf{3.38}
\\\hline

75   & lois-1 & 0.02	&300	&{109.78}	&1509.15	&{3.00}	&10.52
\\
     & lois-2 & 15.08 &	22500	& \textbf{200.31}	& \textbf{1634.39}	& \textbf{2.00}	&\textbf{6.99}
 \\\hline

100  & lois-1 & 0.04	&400	&{123.18}	&{1906.50}	&{3.48}	&{44.34}
\\
     & lois-2 & 41.18 &	40000	&\textbf{272.11} &	\textbf{2417.07} &	\textbf{1.10} &	\textbf{6.50}
\\
\bottomrule
\end{tabular}
\label{table:intrinsic}
\end{table}

\subsection{Extrinsic evaluation}

For this evaluation, we compare \textit{LOIS-1} method against an analogous algorithm called \textit{zero-regret (ZR)} \cite{dragotto2023zero} that supports equilibrium selection for pure Nash equilibrium. The results for ZR for synthetic CNGs of the same size (and constructed using the analogous procedure) are obtained directly from \citet{dragotto2024critical} as the work is recent and performed on comparable hardware. Similar to their setup, we average results across 20 feasible samples for each instance size. Mimicking the work in ZR, we perform two optimizations for each CNG instance with respect to both the attacker's objective and the defender's objective to calculate $PoA, f^a$ and $PoD, f^d$ respectively. Since this test involves welfare optimization, we use a freely available mixed-integer solver SCIP \cite{achterberg2009scip}. Like the original experiment, we set the maximum solve time to be 100 seconds after which the best available results are used. The results are outlined in table \ref{table:cng}. The local solutions are, on average, better than what ZR produces but some of the edge cases seem to be worse (with a noticiable variation in $PoA$).

\begin{table}[ht]
\centering
\caption{Comparison of ZR and \textit{LOIS-1} for the simultaneous CNG. \textit{LOIS-1} is faster and on average has better \textit{PoA, PoS} for the corresponding objective being maximized. In some edge cases, however, LOIS-1 produces poorer results than ZR.}
\begin{tabular}{
    S[table-format=3.0]  
    l                     
    S[table-format=2.1]   
    S[table-format=2.2]   %
    l   %
    S[table-format=2.1]   %
    l   %
    S[table-format=4.2]   %
    S[table-format=4.2] 
}
\toprule
{$|V|$} & {Method} & {Time (s)} & {PoS} & {POS range} & {PoA} & {PoA range} & {$f^d$} & {$f^a$}\\
\midrule
10   & lois-1 & 0.04 & 1.03 & [1.02, 1.07] & 1.64 & [1.00, 3.65] & 247.54 & 48.52\\
   & ZR & 27.65 & 1.10 & [1.00, 1.39] & 2.59 & [1.38, 5.00] & 1731.45 & 99.92\\

25   & lois-1 &0.58 & 1.02 & [1.00, 1.07] & 1.62 & [1.00, 4.03] & 629.00 & 104.03\\
   & ZR & 12.24 & 1.09 & [1.01, 1.34] & 2.72 & [1.43, 5.00] & 729.41 & 35.38\\

50   & lois-1 & 2.62 & 1.01 & [1.00, 1.06] & 2.66 & [1.00, 17.99] & 1269.44 & 163.94\\
   & ZR & 34.08 & 1.11 & [1.00, 1.35] & 2.17 & [1.56, 2.98] & 1592.66 & 90.34 \\

75   & lois-1 & 3.86 & 1.02 & [1.00, 1.05] & 3.63 & [1.00, 19.95] & 1926.86 & 351.05\\
   & ZR & 48.87 & 1.11 & [1.00, 1.36] & 2.40 & [1.07, 3.24] & 2211.23 & 129.07\\

100  & lois-1 & 3.51 & 1.01 & [1.00, 1.06] & 2.67 & [1.00, 25.88] & 2591.51 & 513.45\\
  & ZR & 65.57 & 1.12 & [1.00, 1.30] & 4.27 & [1.60, 7.72] & 3152.46 & 157.07\\

150  & lois-1 & 21.01 & 1.02 & [1.00, 1.06] & 3.66 & [1.00, 20.70] & 3886.94 & 561.11\\
& ZR & 86.69 & 1.17 & [1.01, 1.39] & 5.02 & [1.51, 10.16] & 6842.50  & 272.88\\

300  & lois-1 & 32.61 & 1.01 & [1.00, 1.05] & 3.80 & [1.00, 10.70] & 7682.50 & 1289.36 \\
  & ZR & 100.02 & 1.10 & [1.01, 1.26] & 4.09 & [1.42, 7.60] & 8592.27 & 479.91\\
\bottomrule
\end{tabular}
\label{table:cng}
\end{table}

We make a final comparison for the sequential version of CNG with the defender acting as the leader and committing to an action first, following which the attacker chooses their strategies. The algorithm we compare against is the bilevel algorithm of \citet{fischetti2017new} and we use the recently obtained results from \citet{carvalho2023integer} for this purpose. Our results are tabulated in table \ref{table:bilevel}. As reported in \citet{carvalho2023integer}, the bilevel algorithm hits time-limit multiple times as the instance size increases whereas \textit{LOIS-1} does not --- suggesting that the latter could be scaled easily for bilevel IPGs of large sizes. Since there are more local solutions than pure Nash solutions, LOIS-1 (while optimizing for the defender) having lower $PoS$ is expected even in a bilevel setting.

\begin{table}[ht]
\centering
\caption{Comparison of stackelberg \textit{LOIS-1} with the Bilevel algorithm of \citet{fischetti2017new} for the sequential CNG. \textit{LOIS-1} solutions are orders of magnitude faster, never hit time limit (TL), and show potential to be used for solving a large-scale Stackelberg program.}
\begin{tabular}{
    S[table-format=3.0]  
    l                     
    l
    S[table-format=1.1]   
    S[table-format=1.1]   
}
\toprule
{$|V|$} & {Method} & {TL} & {Time (s)} & {\textit{PoS}} \\
\midrule
10   & lois-1 & --- & 0.01 & 1.03 \\
10   & Bilevel & 0 & 0.07 & 1.43 \\
25   & lois-1 & --- & 0.06 & 1.02 \\
25   &  Bilevel & 20 & 120.00 & 1.36 \\
50   & lois-1 & --- & 0.12 & 1.01  \\
50   & Bilevel & 20 & 120.00 & 1.37  \\
\bottomrule
\end{tabular}
\label{table:bilevel}
\end{table}

\section{Conclusion}
\label{sec:conclusion}
In this paper, we introduced a concept of approximate equilibrium for integer programming games, which we termed LOIS-m. These locally optimal integer solutions are substantially easier to find than Nash equilibria, and the corresponding conditions of (local) optimality can be formulated as lists of implication constraints and solved with an off-the-shelf solver after a straight-forward re-encoding process. Based on a cybersecurity example (the ``critical node game’’), LOIS-$1$ solutions are substantially faster to find while producing results of comparable quality. We also note some useful properties of the LOIS solution concept: it can be extended gracefully to generalized settings, permits equilibrium enumeration and selection, and can be extended to Stackelberg or a Stackelberg-Nash setting. 

We note some limitations in our findings. First, we test our approach primarily in the critical node game. A promising avenue for future research may be to find novel IPGs from different domains and contrast their pure Nash solutions with LOI solutions. Furthermore, LOIS-m is a local solution concept based on a particular concept of integer neighborhood. It is not the \textit{only} such possible solution concept, and other, equally reasonable variants may exist. Given the applicability of IPGs and the difficulty in seeking exact equilibria for them, we hope LOIS-m may inspire other approximate equilibrium concepts to support an ecosystem of solution approaches.

\bibliographystyle{ACM-Reference-Format}
\bibliography{sample-bibliography}


\end{document}